\newtheorem{theorem}{Theorem}
\newtheorem{proposition}{Proposition}
\newtheorem{corollary}{Corollary}
\newtheorem{lemma}{Lemma}
\theoremstyle{definition}
\theoremstyle{remark}
\newtheorem{remark}{Remark}
\newcommand{\PP}{\mathbb{P}}
\newcommand{\dnormal}{\mathcal{N}}
\newcommand{\dgamma}{\mathcal{G}}
\title{Exact two-stage finite-mixture representations for species sampling processes}
\author{
Rams\'es H. Mena \and
Christos Merkatas \and
Theodoros Nicoleris \and
Carlos E. Rodr\'iguez
}
\date{} % no date for submission
\begin{document}

\maketitle

\begin{abstract}
Discrete random probability measures are central to Bayesian inference, particularly as priors for mixture modeling and clustering. A broad and unifying class is that of proper species sampling processes (SSPs), encompassing many Bayesian nonparametric priors.   We show that any proper SSP admits an exact two-stage finite-mixture representation built from a latent truncation index and a simple reweighting of the atoms. For each realized truncation index, the representation has finitely many atoms, and averaging over the induced law of that index recovers the original SSP setwise. This yields at least two consequences: (i) an exact two-stage finite construction for arbitrary SSPs, without user-chosen truncation levels; and (ii) posterior inference in SSP mixture models via standard finite-mixture machinery, leading to tractable MCMC algorithms without ad hoc truncations. We explore these consequences by deriving explicit total-variation bounds for the approximation error when the truncation level is fixed, and by studying practical performance in mixture modeling, with emphasis on Dirichlet and geometric SSPs.
\end{abstract}

\noindent{\small\textbf{Keywords:}
Bayesian nonparametrics; clustering; mixture model; posterior computation; random probability measure.

\section{Introduction}
Random probability measures, together with their constructions, representations and associated algorithms, play a central role in Bayesian nonparametrics and in the design of flexible statistical models whose posterior computation is often analytically intractable. The canonical example is the Dirichlet process~\citep{Ferguson1973}, which provides a flexible prior on probability measures while retaining analytic and computational tractability. Over the years, a variety of alternative priors have been proposed, including normalized completely random measures \citep{regazzini2003}, Gibbs–type priors \citep{GnedinPitman2006,LijoiMenaPrunster2007Biometrika,DeBlasi2015gibbstype}
and many stick–breaking constructions \citep{sethuraman1994constructive, PitmanYor1997, ishwaran2001, Gil-Leyva2023}. A unifying perspective on this landscape is provided by \emph{species sampling processes} (SSPs), which capture a broad class of discrete random probability measures and the exchangeable sequences they induce.

An SSP on a Polish space $(\mathbb{X}, \mathcal{B}_{\mathbb{X}})$ is a random probability measure of the form
\begin{equation}
G(A) = \sum_{j=1}^\infty w_j\,\delta_{\theta_j}(A)
+
\Bigg(1 - \sum_{j=1}^\infty w_j\Bigg)G_0(A),
\label{eq:ssp-general}
\end{equation}
with $A\in\mathcal{B}_{\mathbb{X}}$ and where the atoms $(\theta_j)_{j\geq 1}$, are independent and identically distributed (i.i.d.) from a diffuse base measure $G_0$ on $(\mathbb{X},\mathcal{B}_{\mathbb{X}})$, the weights $w_j\ge0$ satisfy $\sum_j w_j\le 1$ almost surely (a.s.), and $(\theta_j)_j$ is independent of $(w_j)_j$~\citep{Pitman1996ssm,pitman2006}. When $\sum_j w_j = 1$ a.s., the diffuse component vanishes and $G$ becomes an a.s.\  \emph{proper} SSP. From a Bayesian viewpoint, proper SSPs are the relevant objects, as a fixed diffuse component cannot learn from the data. Throughout we focus on proper SSPs.

By de Finetti’s theorem, any exchangeable sequence $(X_i)_{i\ge1}$ with $\mathbb{X}$-valued elements is conditionally i.i.d., given a random directing measure $G$, and a large class of such sequences arises by taking $G$ to be a proper SSP. In this case we speak of a \emph{species sampling model} (SSM) for $(X_i)_{i\ge1}$ driven by the SSP $G$. The induced clustering structure can be described in terms of the associated exchangeable random partition and its exchangeable partition probability function (EPPF), or, when available, via predictive distributions generalizing the Blackwell–MacQueen Pólya urn~\citep{blackwell1973ferguson,pitman2006}. The connection between SSPs, partitions and predictive rules has made them a backbone of Bayesian nonparametric analysis; see, for example, \citet{LijoiPrunster2010}, \citet{LeeQuintanaMuellerTrippa2013} and \cite{ghosal2017}.

Several constructions have proved  influential in specifying the law of some SSPs. One route proceeds via prediction rules and EPPFs \citep{James2009}, and another via Gibbs–type priors and related classes~\citep{GnedinPitman2006, DeBlasi2015gibbstype}. Central to this work is the \emph{stick–breaking} representation of the weights 
\begin{equation}
w_1 = v_1,\qquad
w_j = v_j \prod_{i=1}^{j-1} (1-v_i),\quad j\ge2,
\label{eq:sb}
\end{equation}
where ``length'' variables $v_j\in(0,1)$ may be independent or dependent. This representation,  initially presented for the Dirichlet process \citep{sethuraman1994constructive},  has been extended in many directions; see, e.g.,~\citet{ishwaran2001,Dunson2008, Favaro2012,Favaro2016,Gil-Leyva2026}. A key point is that any SSP can be represented via stick–breaking weights~\citep{pitman2006,Gil-Leyva2023}.

A useful application of SSPs is in mixture settings. Given a kernel $f(x\mid\theta)$ on $\mathbb{X}$,
an SSP on the parameter space induces the random density~\citep{lo1984}
\begin{equation}
    X \mid G \sim f_G(x)
=\int f(x\mid\theta)\,G(\mathrm{d}\theta)
=\sum_{j=1}^\infty w_j\, f(x\mid\theta_j),
\label{eq:ssp-mixture}
\end{equation}
where $G = \sum_{j\ge1} w_j \delta_{\theta_j}$ almost surely.

This contrasts with classical finite mixtures, which instead posit a fixed number of components $m$ and
model the sampling density as a finite sum of $m$ kernels (with weights $w_{1:m}$ and atoms
$\theta_{1:m}$). In an SSP mixture, by contrast, the discreteness of $G$ induces a random partition of
a sample of size $n$ and thus a posterior distribution for the number of occupied clusters, say $c_n$,
which is a data-dependent occupancy statistic  rather than a
structural model parameter like $m$.

Familiar Bayesian nonparametric mixtures arise as special cases of~\eqref{eq:ssp-mixture},
including mixtures of Dirichlet, Pitman--Yor process  and a wide range of  models
studied in the literature; see, for instance,
\citet{FuentesGarcia2010, LijoiPrunster2010, Gil-Leyva2020,Gil-Leyva2023}.

While SSPs provide a rich modeling framework, their practical usefulness depends on
representations that yield computationally efficient and numerically stable methods. To this end,
SSPs admit several formulations, including EPPFs and predictive rules \citep{LijoiPrunster2010},
stick--breaking weights, and latent--variable slice-sampling constructions \citep{Walker2007,Kalli2011} or retrospective sampling \citep{pastp2008},
which have enabled a variety of applications and extensions \citep[see, e.g.,][]{Ni2020anchor,Canale2022,DeBlasiMafer2023}.
However, for many SSPs of practical interest, neither the EPPF nor the predictive distribution is
available in closed form, and generic stick--breaking samplers often rely on random truncation levels
that are hard to control and can be inefficient or unstable, especially when the number of active
atoms grows rapidly across iterations.

We show that any proper SSP admits an exact two-stage finite-mixture representation built from a latent truncation variable $K$ and reweighting the original atoms. Concretely, given a proper SSP with weights $\boldsymbol{w}=(w_j)_{j\ge1}$ and atoms $\boldsymbol{\theta}=(\theta_j)_{j\ge1}$,
\[
G(A\mid \boldsymbol{w},\boldsymbol{\theta})=\sum_{j=1}^\infty w_j\delta_{\theta_j}(A),
\]
there exists a random pair $(K,\tilde{\boldsymbol{w}})$ such that 
\[
G^\star(A\mid K,\tilde{\boldsymbol{w}},\boldsymbol{\theta})=\sum_{j=1}^K \tilde w_j\delta_{\theta_j}(A),
\] has a fully specified law, and averaging over $K$ recovers the original SSP setwise.

This latent representation has immediate consequences. It yields an exact two-stage  finite construction for any proper SSP, in the spirit of \cite{FergusonKlass1972},
\citet{ishwaran2002} and \citet{ArbelDeBlasiPrunster2019}. Rather than fixing a global truncation level and controlling an approximation error, one samples the truncation index $K$ from its prescribed law and then generates the associated  finite random measure; averaging over $K$ recovers exactly  the original infinite expansion at the level of set masses. It also enables standard finite-mixture machinery (allocations and Gibbs updates) for arbitrary SSPs without ad hoc cutoffs. More broadly, it separates representational convenience from modeling assumptions by disentangling the auxiliary $K$, the data-driven occupancy $c_n$, and the structural component count in genuinely finite mixture models.

\section{From infinite to finite}

Building on Section~1, we now make the  representation explicit for a generic proper SSP.
We introduce an auxiliary truncation index $K$ and reweighted weights $\tilde{\boldsymbol w}$, give their  specified law, and prove that averaging the resulting 
finite random measure over $K$ recovers the original SSP setwise.

\begin{theorem}\label{thm:finite-ssrm}
Let $G$ be a proper SSP on $(\mathbb{X},\mathcal{B}_{\mathbb{X}})$ admitting the a.s. representation
\begin{equation}
G(A \mid \boldsymbol{w}, \boldsymbol{\theta})
=
\sum_{j=1}^\infty w_j \, \delta_{\theta_j}(A),
\qquad A\in\mathcal{B}_{\mathbb{X}},
\label{pSSM}
\end{equation}
where $\boldsymbol{w}=(w_j)_{j\ge1}$ satisfies $w_j\ge0$ and $\sum_{j=1}^\infty w_j=1$ a.s., and
$\boldsymbol{\theta}=(\theta_j)_{j\ge1}$ are independent and identically distributed draws from $G_0$, a diffuse measure on
$(\mathbb{X},\mathcal{B}_{\mathbb{X}})$. Let $\boldsymbol{\xi}:=(\xi_j)_{j\ge1}$ be an a.s. strictly decreasing sequence in $(0,1]$ such that $\xi_j\to0$ as $j\to\infty$.
For each realization of $(\boldsymbol{w},\boldsymbol{\xi})$, define a  random variable $K$ on $\mathbb{N}$ with conditional distribution
\[
\mathbb{P}(K=k\mid\boldsymbol{w},\boldsymbol{\xi})
=
(\xi_k-\xi_{k+1})\, s_k,
\qquad
s_k:=\sum_{h=1}^k \xi_h^{-1} w_h,
\quad k\ge1.
\]
Then $\mathbb{P}(K=\cdot\mid\boldsymbol{w},\boldsymbol{\xi})$ is a well-defined probability mass function (almost surely).

Conditionally on $(\boldsymbol{w},\boldsymbol{\theta},\boldsymbol{\xi},K=k)$, define the finite random measure
\begin{equation}
G^\star(A\mid K=k,\tilde{\boldsymbol{w}},\boldsymbol{\theta},\boldsymbol{\xi})
=
\sum_{j=1}^k \tilde w_j\,\delta_{\theta_j}(A),
\qquad
\tilde w_j:=\frac{\xi_j^{-1}w_j}{s_k}.
\label{finiterep}
\end{equation}
Then, for every $A\in\mathcal{B}_{\mathbb{X}}$,
\[
\sum_{k=1}^\infty
G^\star(A \mid \tilde{\boldsymbol{w}}, \boldsymbol{\theta},\boldsymbol{\xi}, K = k)
\, \mathbb{P}(K = k \mid \boldsymbol{w},\boldsymbol{\xi})
=
G(A \mid \boldsymbol{w},\boldsymbol{\theta})
\qquad\text{a.s.}
\]
%That is, $G^\star=G$, marginally.
\end{theorem}
Equivalently, the theorem states that the original SSP is recovered exactly setwise after averaging with respect to the law of $K$ given $(\boldsymbol{w},\boldsymbol{\xi})$. It does not claim that, for a fixed realized value $K=k$, the 
finite measure $G^\star(\cdot \mid K=k,\tilde{\boldsymbol{w}},\boldsymbol{\theta},\boldsymbol{\xi})$ has the same law as the original SSP.
\begin{proof}
Let
$
\Omega_\xi:=\big\{\xi_1>\xi_2>\cdots,\ \xi_j\in(0,1],\ \forall\ j,\ \text{and }\xi_j\to0\big\}.
$ 
By assumption, $\mathbb{P}(\Omega_\xi)=1$. We work on $\Omega_\xi$ and condition on a fixed realization of
$(\boldsymbol{w},\boldsymbol{\xi})$.
We first check that $\mathbb{P}(K = k \mid \boldsymbol{w},\boldsymbol{\xi})$ defines a valid pmf.
Since $\{\xi_k\}$ is strictly decreasing in $(0,1]$ on $\Omega_\xi$, $\xi_k-\xi_{k+1}>0$ for all $k$, hence
$\mathbb{P}(K = k \mid \boldsymbol{w},\boldsymbol{\xi}) \geq 0$. Moreover, using Tonelli’s theorem,

\begin{align}
\sum_{k=1}^\infty \mathbb{P}(K = k \mid \boldsymbol{w},\boldsymbol{\xi})
&= \sum_{k=1}^\infty (\xi_k - \xi_{k+1}) \, s_k \nonumber\\
&= \sum_{j=1}^\infty \sum_{k=j}^\infty (\xi_k - \xi_{k+1}) \, \xi_j^{-1} w_j,
\label{eq:change-order-rand}\\
&= \sum_{j=1}^\infty \xi_j^{-1} w_j \sum_{k=j}^\infty (\xi_k - \xi_{k+1}),
\label{eq:factor-out-rand}\\
&= \sum_{j=1}^\infty \xi_j^{-1} w_j \,
\big(\lim_{l\to\infty}(\xi_j-\xi_{l+1})\big) \nonumber\\
&= \sum_{j=1}^\infty \xi_j^{-1} w_j \, \xi_j
= \sum_{j=1}^\infty w_j
=1,
\label{eq:telescoping-rand}
\end{align}
where \eqref{eq:telescoping-rand} uses $\xi_l\to0$ on $\Omega_\xi$. This proves that
$\mathbb{P}(K=\cdot\mid\boldsymbol{w},\boldsymbol{\xi})$ is a proper pmf.  Next, we verify that marginalizing over $K$ recovers the original  measure. For any $A\in\mathcal{B}_{\mathbb{X}}$,
\begin{align}
\sum_{k=1}^\infty
G^\star(A \mid \tilde{\boldsymbol{w}}, \boldsymbol{\theta},\boldsymbol{\xi}, K = k)
\, \mathbb{P}(K = k \mid \boldsymbol{w},\boldsymbol{\xi})
&= \sum_{k=1}^\infty
\left[ \sum_{j=1}^k \tilde{w}_j \, \delta_{\theta_j}(A) \right]
(\xi_k - \xi_{k+1}) s_k \nonumber\\
&= \sum_{k=1}^\infty \sum_{j=1}^k
\frac{\xi_j^{-1} w_j}{s_k} \, \delta_{\theta_j}(A) \,
(\xi_k - \xi_{k+1}) s_k \nonumber\\
&= \sum_{k=1}^\infty \sum_{j=1}^k
(\xi_k - \xi_{k+1}) \, \xi_j^{-1} w_j \, \delta_{\theta_j}(A) \nonumber\\
&= \sum_{j=1}^\infty \sum_{k=j}^\infty
(\xi_k - \xi_{k+1}) \, \xi_j^{-1} w_j \, \delta_{\theta_j}(A) \label{cslice}\\
&= \sum_{j=1}^\infty \xi_j^{-1} w_j \, \delta_{\theta_j}(A)
\sum_{k=j}^\infty(\xi_k-\xi_{k+1}) \nonumber\\
&= \sum_{j=1}^\infty \xi_j^{-1} w_j \, \delta_{\theta_j}(A)\,\xi_j \nonumber\\
&= \sum_{j=1}^\infty w_j\,\delta_{\theta_j}(A)
= G(A \mid \boldsymbol{w},\boldsymbol{\theta}),
\end{align}
which completes the proof.
\end{proof}

Theorem~\ref{thm:finite-ssrm}  shows that any proper SSP admits an exact two-stage  finite representation with random truncation level $K$, whose setwise average recovers the original SSP. The inspiration is taken from the 
method by \cite{Kalli2011}. Introduce an auxiliary variable $u\in(0,1)$ and define
\[
G(A,u \mid \boldsymbol{w},\boldsymbol{\theta})
=
\sum_{j=1}^{\infty} \xi_j^{-1}\,\mathbb{I}(u\leq \xi_j)\, w_j \,\delta_{\theta_j}(A),
\]
which defines a joint kernel on $\mathcal{B}_{\mathbb{X}}\times(0,1)$ so that
\begin{align*}
\int_{0}^1  G(A,u \mid \boldsymbol{w}, \boldsymbol{\theta})\,\mathrm{d}u 
&= \int_0^1 \sum_{j=1}^{\infty}
\xi_j^{-1}\mathbb{I}(u\leq \xi_j) w_j \delta_{\theta_j}(A)\,\mathrm{d}u, \\
&= \sum_{k=1}^{\infty}\int_{\xi_{k+1}}^{\xi_k}
\sum_{j=1}^{\infty}
\xi_j^{-1}\mathbb{I}(u\leq \xi_j) w_j \delta_{\theta_j}(A)\,\mathrm{d}u, \\
&= \sum_{k=1}^{\infty}\int_{\xi_{k+1}}^{\xi_k}
\sum_{j=1}^{k} \xi_j^{-1} w_j \delta_{\theta_j}(A)\,\mathrm{d}u, \\
&= \sum_{k=1}^{\infty} \sum_{j=1}^k
(\xi_k - \xi_{k+1}) \xi_j^{-1} w_{j}\delta_{\theta_j}(A).
\end{align*}
Thus we recover (\ref{cslice}), 
which serves as the bridge between the  finite representation and the original SSP obtained setwise after averaging over $K$. In the slice-sampling formulation, the latent truncation level $K$ corresponds to the index of the last atom ``visible'' for a given slice
$u$, and the conditional law $\mathbb{P}(K=k\mid\boldsymbol{w}, \boldsymbol{\xi})$ arises
 from the lengths of the intervals $(\xi_{k+1},\xi_k]$ and the
weights $w_1,\dots,w_k$.

For a fixed SSP specified through its infinite weight sequence $\{w_j\}_{j\ge1}$, the theorem yields a family of exact  finite representations indexed by a decreasing sequence $\{\xi_j\}\downarrow0$. Different choices of $\{\xi_j\}$ (deterministic or random) simply induce different conditional laws for the latent truncation level $K$ given $w$. Such representations can be exploited for computational efficiency or theoretical explorations, without altering the underlying SSP at the setwise level after averaging over $K$.

An interesting case arises when we  consider a \emph{natural} choice of random $\boldsymbol{\xi}$ derived via the tail mass of the stick-breaking construction. 

\begin{corollary}\label{C2}
Let $G$ be a species sampling process with stick--breaking representation
\[
w_1 = v_1,\qquad
w_j = v_j \prod_{l=1}^{j-1} (1 - v_l), \quad j\ge2,
\]
where $v_j\in(0,1)$ almost surely and the resulting weights satisfy
$\sum_{j\ge1} w_j = 1$ almost surely. Define
\[
\xi_j = \prod_{l=1}^{j-1} (1 - v_l),\qquad \xi_1 = 1.
\]
Then $\{\xi_j\}$ is a.s. decreasing in $(0,1]$ with $\xi_j\downarrow 0$,
and Theorem~\ref{thm:finite-ssrm} applies with
\[
s_k = \sum_{j=1}^k \xi_j^{-1}w_j = \sum_{j=1}^k v_j,
\qquad
\tilde w_j = \frac{\xi_j^{-1}w_j}{s_k} = \frac{v_j}{s_k}.
\]
Consequently, conditional on $(\boldsymbol v,\boldsymbol\theta,K=k)$,
\[
G^\star(A \mid \boldsymbol v,\boldsymbol\theta,K=k)
= \frac{1}{s_k}\sum_{j=1}^k v_j\,\delta_{\theta_j}(A),
\qquad A\in\mathcal{B}_{\mathbb{X}},
\]
and the truncation level has conditional distribution
\[
\mathbb P(K=k\mid \boldsymbol v)
= (\xi_k-\xi_{k+1})s_k
= w_k\, \sum_{j=1}^k v_j.
\] 
\end{corollary}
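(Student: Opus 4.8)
The plan is to recognize that the stick--breaking construction supplies a particular admissible sequence $\{\xi_j\}$ for Theorem~\ref{thm:finite-ssrm}, after which every displayed identity collapses to algebraic substitution. The single structural relation driving the whole argument is
\[
w_j = v_j \prod_{l=1}^{j-1}(1-v_l) = v_j\,\xi_j,\qquad j\ge1,
\]
with the empty product equal to one. I would record this at the outset, because it immediately yields $\xi_j^{-1}w_j = v_j$ and, via $\xi_{j+1} = (1-v_j)\xi_j$, the telescoping identity $\xi_j - \xi_{j+1} = v_j\xi_j = w_j$.

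First I would verify the hypotheses of the theorem. Since $v_l\in(0,1)$ almost surely, each factor $(1-v_l)$ lies strictly in $(0,1)$, so $\xi_{j+1} = (1-v_j)\xi_j < \xi_j$ gives strict monotonicity while $0 < \xi_j \le 1$. The boundary value $\xi_1 = 1$ is harmless, as the proof of Theorem~\ref{thm:finite-ssrm} invokes only strict decrease together with $\xi_j\downarrow 0$, both of which persist here.

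The one point requiring genuine argument, and the step I expect to be the main obstacle, is showing $\xi_j\downarrow 0$ almost surely; this is precisely where properness enters. Summing the telescoping relation $w_j = \xi_j - \xi_{j+1}$ gives the partial-sum identity $\sum_{j=1}^k w_j = \xi_1 - \xi_{k+1} = 1 - \xi_{k+1}$, so that $\sum_{j\ge1} w_j = 1 - \lim_{k\to\infty}\xi_{k+1}$. Hence the assumption $\sum_{j\ge1} w_j = 1$ a.s.\ is \emph{equivalent} to $\xi_j\downarrow 0$ a.s., and the posited properness of $G$ delivers exactly the limit condition the theorem demands. I would emphasize that this reasoning is pathwise in each realization of $\boldsymbol v$, so no independence among the $v_j$ is ever used, matching the claim in the statement.

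With the hypotheses in place, the remainder is substitution into the conclusions of Theorem~\ref{thm:finite-ssrm}. One computes $s_k = \sum_{j=1}^k \xi_j^{-1}w_j = \sum_{j=1}^k v_j$, whence $\tilde w_j = \xi_j^{-1}w_j/s_k = v_j/s_k$ and $G^\star(\cdot\mid\boldsymbol v,\boldsymbol\theta,K=k) = s_k^{-1}\sum_{j=1}^k v_j\,\delta_{\theta_j}$; likewise $\mathbb P(K=k\mid\boldsymbol v) = (\xi_k-\xi_{k+1})s_k = w_k\sum_{j=1}^k v_j$. The equality in distribution $G^\star \stackrel{d}{=} G$ is then inherited directly from Theorem~\ref{thm:finite-ssrm}, and the Dirichlet process and DSBw cases follow simply by specializing the law of $\{v_j\}$.
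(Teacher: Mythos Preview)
Your proposal is correct and follows the same route the paper intends: the corollary is stated without an explicit proof and is meant to be read as a direct specialization of Theorem~\ref{thm:finite-ssrm} via the identification $w_j = v_j\xi_j$, which is exactly what you do. If anything, you are more careful than the paper, since you explicitly justify $\xi_j\downarrow 0$ from properness via the telescoping identity $\sum_{j=1}^k w_j = 1-\xi_{k+1}$, and you flag that $\xi_1=1$ lies on the boundary of the interval $(0,1)$ stipulated in Theorem~\ref{thm:finite-ssrm} while noting that the proof there never uses $\xi_j<1$.
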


Dirichlet processes ($v_j\overset{iid}\sim\mathrm{Beta}(1,\alpha)$), two-parameter Pitman--Yor processes
($v_j\sim\mathrm{Beta}(1-\sigma,\alpha+j\sigma)$), and related stick--breaking priors fit directly into
Corollary~\ref{C2}. A particularly transparent special case is the geometric stick--breaking process:
if $v_j\equiv v$ (with $v\sim\mathrm{Beta}(a,b)$), then $w_j=v(1-v)^{j-1}$ and $\xi_j=(1-v)^{j-1}$, so that
$s_k=\sum_{j=1}^k v_j=k\,v$ and $\tilde w_j=v_j/s_k=1/k$. Hence, conditional on $K=k$,
\[
G^\star(A\mid \boldsymbol\theta,K=k)=\frac{1}{k}\sum_{j=1}^k \delta_{\theta_j}(A),
\qquad A\in\mathcal{B}_{\mathbb{X}},
\]
and the truncation level simplifies to
\[
\mathbb P(K=k\mid v)=w_k\sum_{j=1}^k v_j
= k\,v^2(1-v)^{k-1},\qquad k=1,2,\ldots,
\]
which is a proper pmf since $\sum_{k\ge1}k\,v^2(1-v)^{k-1}=1$.

    Thus, in the geometric stick-breaking case, Theorem~\ref{thm:finite-ssrm} yields a particularly simple  finite representation with equal weights $1/k$ on the first $k$ atoms; after averaging over $K$, this two-stage construction recovers the original geometric stick-breaking SSP of \cite{FuentesGarcia2010} in the setwise sense of Theorem~\ref{thm:finite-ssrm}.

Our representation has two immediate implications. First, it provides an exact two-stage  finite construction for any proper SSP, where the target process is recovered setwise after averaging over the auxiliary truncation variable $K$. Second, it yields a natural finite-dimensional augmentation that enables posterior computation within standard Bayesian nonparametric methods. The next two sections develop these two uses.

\section{Simulation via the two-stage representation}
The finite mixture representation in  Theorem~\ref{thm:finite-ssrm} yields a direct simulation mechanism for any proper SSP prior. By introducing a latent truncation level $K$ and reweighting the first $K$ atoms, the construction produces a finite random measure $G^\star(\cdot \mid K,\tilde w,\theta,\xi)$. Averaging over $K$ yields a measure that coincides with the original SSP on every measurable set. In this sense, the construction provides an exact two-stage finite construction for prior simulation.
Unlike classical truncation schemes, the truncation index $K$ is not chosen to control an approximation error. Rather, $K$ is sampled from its induced law and then the corresponding  finite measure is generated; averaging over $K$ recovers the target SSP for any measurable set. This stands in contrast to deterministic truncations, where the cutoff is fixed in advance to trade accuracy for computational cost.

To place this construction in context, consider the approach by
\cite{ArbelDeBlasiPrunster2019}, who propose a simulation scheme for the
Pitman--Yor process (PYP) that achieves \emph{almost sure} error control, with particular simplifications in the Dirichlet process (DP) case corresponding to discount parameter $\sigma=0$.

As before, let $G=\sum_{j\ge1} w_j \delta_{\theta_j}$, with $\sum_{j\ge1} w_j = 1$ a.s. and denote the remainder (tail) mass as $R_n := 1 - \sum_{j=1}^n w_j = \sum_{j>n} w_j$. For $\varepsilon \in (0,1)$, introduce the stopping time
$\label{eq:tau_eps_directsim}
\tau(\varepsilon) := \min\{ n \ge 1 : R_n < \varepsilon \}.
$ 
A finite approximation is then obtained by \emph{lumping the tail}
into a single additional atom,
\begin{equation}\label{eq:arbel_measure_directsim}
G_{\varepsilon}
=
\sum_{j=1}^{\tau(\varepsilon)} w_j \, \delta_{\theta_j}
\;+\;
R_{\tau(\varepsilon)} \, \delta_{\theta_0},
\qquad \theta_0 \sim G_0 .
\end{equation}
By construction, $R_{\tau(\varepsilon)} < \varepsilon$ almost surely, and hence $d_{\mathrm{TV}}(G,G_{\varepsilon})\le R_{\tau(\varepsilon)} < \varepsilon$ a.s., with 
\[
d_{\mathrm{TV}}(\mu,\nu)
:= \sup_{A\in\mathcal B_{\mathbb X}} |\mu(A)-\nu(A)|
= \tfrac12 \|\mu-\nu\|_{1}.
\]
In the DP case, where 
$v_j \overset{\text{iid}}{\sim} \mathrm{Beta}(1,\alpha)$,
a further probabilistic characterization is available.
If $v \sim \mathrm{Beta}(1,\alpha)$, then
$Y := -\log(1-v) \sim \mathrm{Exp}(\alpha)$, so that
\begin{equation}\label{eq:exp_wait_directsim}
-\log R_n = \sum_{j=1}^n Y_j .
\end{equation}
Consequently, $\tau(\varepsilon)$ can be interpreted in terms of a Poisson
process in ``time'' $t = \log(1/\varepsilon)$, 
\begin{equation}\label{eq:pois_tau_directsim}
\tau(\varepsilon) - 1 \sim \mathrm{Pois}\big(\alpha \log(1/\varepsilon)\big).
\end{equation}
This representation quantifies expected effort for accuracy $\varepsilon$ under almost sure control.

Theorem~\ref{thm:finite-ssrm} provides a complementary route, in a more general setup.  Instead of choosing a truncation level to make $R_n$ smaller than a pre-specified tolerance, it introduces a latent truncation variable $K$ with a fully specified  distribution given the weights, and reweights the first $K$ atoms.
Given a choice of $\boldsymbol{\xi}$  and $\boldsymbol{w}$, direct simulation proceeds by first sampling  $K$ from the pmf in Theorem~\ref{thm:finite-ssrm} and then \eqref{finiterep}.
Figure~\ref{fig:dp_directsim_cdf}
illustrates such simulations for the DP with $G_0=\mathrm{Unif}(0,1)$ and deterministic decreasing sequence  $\xi_j=\exp(-\eta j)$, with $\eta>0$. The figure also shows the simulations corresponding to the truncation with almost sure error control \citet{ArbelDeBlasiPrunster2019} and a large fixed truncation at $N$ (e.g.\ $N=10^4$) with tail lumping.

\begin{figure}[H]
  \centering
  \includegraphics[width= 0.8\textwidth]{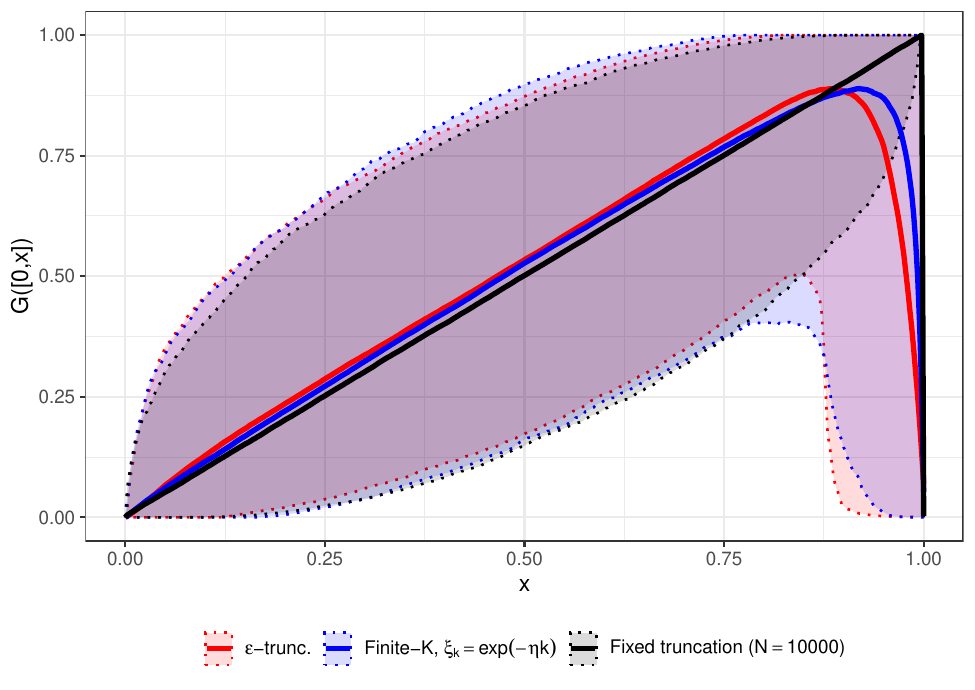}
  \caption{DP simulation comparison under three  scenarios.
  The figure displays the pointwise mean of $G([0,x])$ across repeated simulations and pointwise $95\%$ bands.
  %See the accompanying R code for details and timing summaries.
  }
  \label{fig:dp_directsim_cdf}
\end{figure}

\subsection{Total variation comparisons}\label{subsec:tv-directsim}

A convenient way to compare truncation schemes, and to assess the fidelity of
finite representations relative to our construction, is via the total variation
distance. This choice is especially natural for coupling arguments, since
competing finite measures can be built from the \emph{same} underlying
stick--breaking realization. Throughout this section, all comparisons are made
under the natural coupling in which the competing finite measures are
constructed from a common series representation $(w_j,\theta_j)_{j\ge1}$.

A key technical ingredient is to control the discrepancy induced by the
$\xi$--reweighting of the first $k$ atoms.
This operation can be viewed as a form of \emph{tilting} of a discrete
probability mass function. The following uniform bound will be used repeatedly.

\begin{lemma}\label{lem:tilt_tv}
Let $P$ be a probability mass function on $\{1,\ldots,k\}$ and let $f:\{1,\ldots,k\}\to(0,\infty)$ satisfy $0<a\le f(i)\le b<\infty$ for all $i$. Define a tilted  $Q$ by
\[
Q(i):=\frac{f(i)P(i)}{\mathbb E_P[f]},\qquad i=1,\ldots,k.
\]
Then
\[
d_{\mathrm{TV}}(P,Q)\le \frac{b-a}{b+a}
=\frac{(b/a)-1}{(b/a)+1}.
\]
\end{lemma}
\begin{proof}
Let $m := \mathbb E_P[f] \in [a,b]$ and define $g(i):=f(i)/m$.
Then $\mathbb E_P[g]=1$ and
\[
g(i)\in\Big[\frac{a}{b},\frac{b}{a}\Big]
=
\Big[\frac{1}{r},r\Big],
\qquad r:=\frac{b}{a}\ge 1.
\]
Since $Q(i)=g(i)P(i)$, we have
\[
d_{\mathrm{TV}}(P,Q)
=
\frac12 \sum_{i=1}^k P(i)\,|1-g(i)|
=
\frac12\,\mathbb E_P\big[|1-g|\big].
\]
Under $g\in[1/r,r]$ with $\mathbb E_P[g]=1$, the maximum of $\mathbb E_P|1-g|$ is attained by a two-point law on $\{1/r,r\}$, giving $\mathbb E_P|1-g|\le 2(r-1)/(r+1)$ and hence
\[
d_{\mathrm{TV}}(P,Q)\le \frac{r-1}{r+1}=\frac{b-a}{b+a}.
\]
\end{proof}
The next proposition presents  useful total variation bounds for our representation.
\begin{proposition}\label{prop:tv_bounds}
% --- Replace the block in the proposition statement by the following (same content, with an explicit
%     definition of the extended weights $(\bar w_j)_{j\ge1}$ to avoid the $j>k$ confusion).

Let $G(\cdot)=\sum_{j\ge1} w_j\,\delta_{\theta_j}(\cdot)$ be a proper SSP and set $R_k:=\sum_{j>k}w_j$.
Let $\boldsymbol\xi=(\xi_j)_{j\ge1}$ be strictly decreasing with $\xi_j>0$ and $\xi_j\downarrow0$ a.s.
For $k\ge1$ define
\[
s_k:=\sum_{h=1}^k \frac{w_h}{\xi_h},\qquad
\tilde w_j:=\frac{w_j/\xi_j}{s_k},\qquad
G^\star_k:=\sum_{j=1}^k \tilde w_j\,\delta_{\theta_j},
\]
and define the renormalized truncation
\[
G^{(k)}_{\mathrm{ren}}:=\sum_{j=1}^k \frac{w_j}{1-R_k}\,\delta_{\theta_j}.
\]
Equivalently, if $(\bar w_j)_{j\ge1}$ denotes the full weight sequence of $G^{(k)}_{\mathrm{ren}}$ on the
countable support $\{\theta_j:j\ge1\}$, where $\bar w_j=w_j/(1-R_k)$ for $j\le k$ and $\bar w_j=0$ for $j>k$, with $(w_j)_{j>k}$ the original tail weights of $G$. Let $M_k:=\xi_1/\xi_k$ and $D_k:=(M_k-1)/(M_k+1)$. Then, for every $k\ge1$,

\begin{enumerate}
\item $d_{\mathrm{TV}}(G^{(k)}_{\mathrm{ren}},G^\star_k)\le D_k$ and $d_{\mathrm{TV}}(G,G^\star_k)\le R_k+D_k$.
\item If $\xi_j=e^{-\eta j}$ with $\eta>0$, then $D_k=\tanh\!\big(\eta(k-1)/2\big)$.
\item  Suppose $w_j=v_j\prod_{\ell<j}(1-v_\ell)$ and define the
remaining stick after $k-1$ breaks by
\[
T_{k-1}:=\prod_{\ell<k}(1-v_\ell),\qquad T_0:=1.
\]
Choose $\xi_k:=T_{k-1}$. Since $T_{k-1}=\sum_{j\ge k}w_j=R_{k-1}$, we have
\[
D_k=\frac{M_k-1}{M_k+1}
=\frac{(1/T_{k-1})-1}{(1/T_{k-1})+1}
=\frac{1-T_{k-1}}{1+T_{k-1}}
=\frac{1-R_{k-1}}{1+R_{k-1}}.
\]

\item If $K$ follows the distribution from Theorem~1, 
\[
\mathbb E_{K\mid \boldsymbol w,\boldsymbol\xi}\!\big[d_{\mathrm{TV}}(G,G^\star_K)\big]
\le
\mathbb E_{K\mid \boldsymbol w,\boldsymbol\xi}[R_K]
+
\mathbb E_{K\mid \boldsymbol w,\boldsymbol\xi}[D_K].
\]
If, in addition, $\boldsymbol\xi$ is deterministic (or random but independent of $\boldsymbol w$), then
\[
\mathbb P(K=k)
=
(\xi_k-\xi_{k+1})\sum_{h=1}^k\frac{\mathbb E[w_h]}{\xi_h},
\qquad
\mathbb E[R_K]
=
\sum_{1\le h<j}\Big(1-\frac{\xi_j}{\xi_h}\Big)\,\mathbb E[w_h w_j],
\]
and therefore
\begin{equation}\label{eq:tv_bound_deterministic_xi}
\mathbb E\!\big[d_{\mathrm{TV}}(G,G^\star_K)\big]
\le
\sum_{1\le h<j}\Big(1-\frac{\xi_j}{\xi_h}\Big)\,\mathbb E[w_h w_j]
+
\sum_{k\ge1} D_k\,\mathbb P(K=k).
\end{equation}
Moreover, if $G_\varepsilon$ is a coupled truncation with $d_{\mathrm{TV}}(G,G_\varepsilon)<\varepsilon$ a.s., then
\[
d_{\mathrm{TV}}(G^\star_K,G_\varepsilon)\le \varepsilon+R_K+D_K\quad\text{a.s.},
\qquad
\mathbb E[d_{\mathrm{TV}}(G^\star_K,G_\varepsilon)]
\le \varepsilon+\mathbb E[R_K]+\mathbb E[D_K].
\]
\end{enumerate}
\end{proposition}

\begin{remark}\label{rem:random_xi_as}
Let $\Omega_\xi$ be as in the proof of Theorem~\ref{thm:finite-ssrm}. 
Since the bounds in Proposition~\ref{prop:tv_bounds} are \emph{pathwise} in $(\boldsymbol w,\boldsymbol\xi)$, they hold deterministically on $\Omega_\xi$. Hence, if $\boldsymbol\xi$ is random with $\mathbb P(\Omega_\xi)=1$, the same inequalities hold a.s.  under the joint law of all random quantities.%
\end{remark}

\begin{proof}
Work on $\Omega_\xi$ and fix $\omega\in\Omega_\xi$. Then the sequence $(\xi_j(\omega))_{j\ge1}$ is deterministic, so all steps below are deterministic for this fixed $\omega$ and hence the  inequalities hold a.s.

\smallskip
\noindent\textbf{(1)} Fix $k\ge1$ and condition on $(\theta_j)_{j=1}^k$, so both $G^{(k)}_{\mathrm{ren}}$ and $G^\star_k$ have the same support.
Let $P(j):=\bar w_j=w_j/(1-R_k)$ for $j=1,\dots,k$ and set $f(j):=\xi_j^{-1}$. Then $\tilde w_j=f(j)P(j)/{\mathbb E_P[f]},$
so $Q(j):=\tilde w_j$ is the tilt of $P$ by $f$. Since $\xi_1>\cdots>\xi_k>0$, we have
$f(j)\in[1/\xi_1,\,1/\xi_k]$ and $b/a=(1/\xi_k)/(1/\xi_1)=\xi_1/\xi_k=M_k$.
Lemma~\ref{lem:tilt_tv} yields
\[
d_{\mathrm{TV}}\!\big(G^{(k)}_{\mathrm{ren}},G^\star_k\big)=d_{\mathrm{TV}}(P,Q)\le \frac{M_k-1}{M_k+1}=D_k.
\]
We now show that $d_{\mathrm{TV}}(G,G^{(k)}_{\mathrm{ren}})=R_k$.
Since both measures are supported on the countable set $\{\theta_j:j\ge1\}$, total variation reduces to half the $\ell^1$ distance between the corresponding weight sequences
$d_{\mathrm{TV}}(G,G^{(k)}_{\mathrm{ren}})=\frac12\sum_{j\ge1}|w_j-\bar w_j|$, where $\bar w_j=\frac{w_j}{1-R_k}$ for $j\le k$ and $\bar w_j=0$ for $j>k$. Hence
\[
\sum_{j\ge1}\big|w_j-\bar w_j\big|
=
\sum_{j\le k}\Big|w_j-\frac{w_j}{1-R_k}\Big|+\sum_{j>k}|w_j-0|
=
\frac{R_k}{1-R_k}\sum_{j\le k}w_j+\sum_{j>k}w_j
= R_k+R_k =
2R_k,
\]
and  $d_{\mathrm{TV}}(G,G^{(k)}_{\mathrm{ren}})=R_k$. Thus, by the triangle inequality, we obtain inequality (1)
\[
d_{\mathrm{TV}}(G,G^\star_k)
\le d_{\mathrm{TV}}(G,G^{(k)}_{\mathrm{ren}})+d_{\mathrm{TV}}(G^{(k)}_{\mathrm{ren}},G^\star_k)
=R_k+D_k.
\]

\smallskip
\noindent\textbf{(2)} If $\xi_j=e^{-\eta j}$, then $M_k=\xi_1/\xi_k=e^{\eta(k-1)}$, and using $(e^x-1)/(e^x+1)=\tanh(x/2)$
\[
D_k=\frac{e^{\eta(k-1)}-1}{e^{\eta(k-1)}+1}
=\tanh\!\Big(\frac{\eta(k-1)}{2}\Big).
\]

\smallskip
\noindent\textbf{(3)} 
Let $T_{k-1}:=\prod_{\ell<k}(1-v_\ell)$ with $T_0=1$, and take $\xi_k:=T_{k-1}$.
Since $T_{k-1}=\sum_{j\ge k}w_j=R_{k-1}$, we have $M_k=1/T_{k-1}$ and hence
\[
D_k=\frac{M_k-1}{M_k+1}=\frac{1-T_{k-1}}{1+T_{k-1}}=\frac{1-R_{k-1}}{1+R_{k-1}}.
\]

\smallskip
\noindent\textbf{(4)} On $\{K=k\}$, item~(1) gives $d_{\mathrm{TV}}(G,G^\star_K)\le R_K+D_K$.
Taking conditional expectation with respect to $K\mid(\boldsymbol w,\boldsymbol\xi)$ yields
\begin{equation}\label{eq:tv_avg_over_K}
\mathbb E_{K\mid \boldsymbol w,\boldsymbol\xi}\!\big[d_{\mathrm{TV}}(G,G^\star_K)\big]
\le
\mathbb E_{K\mid \boldsymbol w,\boldsymbol\xi}[R_K]
+
\mathbb E_{K\mid \boldsymbol w,\boldsymbol\xi}[D_K].
\end{equation}
If $G_\varepsilon$ is a coupled truncation with $d_{\mathrm{TV}}(G,G_\varepsilon)<\varepsilon$ a.s., then
\[
d_{\mathrm{TV}}(G^\star_K,G_\varepsilon)
\le d_{\mathrm{TV}}(G^\star_K,G)+d_{\mathrm{TV}}(G,G_\varepsilon)
\le (R_K+D_K)+\varepsilon
\quad\text{a.s.},
\]
and taking expectations gives $\mathbb E[d_{\mathrm{TV}}(G^\star_K,G_\varepsilon)]
\le \varepsilon+\mathbb E[R_K]+\mathbb E[D_K]$.

\smallskip
If $\boldsymbol\xi$ is deterministic (or independent of $\boldsymbol w$), then we may also take expectation with respect to the weights.
First,
\begin{equation}
    \mathbb P(K=k)
=
\mathbb E\big[\mathbb P(K=k\mid\boldsymbol w,\boldsymbol\xi)\big]
=
(\xi_k-\xi_{k+1})\,\mathbb E\!\Big[\sum_{h=1}^k\frac{w_h}{\xi_h}\Big]
=
(\xi_k-\xi_{k+1})\sum_{h=1}^k\frac{\mathbb E[w_h]}{\xi_h}.\label{eq:PK_detxi}
\end{equation}
Second,
\[
\mathbb E[R_K]
=
\mathbb E\!\Big[\sum_{k\ge1}\mathbb P(K=k\mid\boldsymbol w,\boldsymbol\xi)\,R_k\Big]
=
\sum_{k\ge1}(\xi_k-\xi_{k+1})
\sum_{h\le k}\sum_{j>k}\frac{\mathbb E[w_h w_j]}{\xi_h}.
\]
Interchanging sums and using $\sum_{k=h}^{j-1}(\xi_k-\xi_{k+1})=\xi_h-\xi_j$ yields
\[
\mathbb E[R_K]
=
\sum_{1\le h<j}\Big(1-\frac{\xi_j}{\xi_h}\Big)\,\mathbb E[w_h w_j].
\]
Combining this identity with $\mathbb E[D_K]=\sum_{k\ge1}D_k\,\mathbb P(K=k)$ gives \eqref{eq:tv_bound_deterministic_xi}.
\end{proof}

The bounds in Proposition~\ref{prop:tv_bounds} are \emph{coupling-based} and \emph{pathwise}. Conditional on a realization of $(\boldsymbol w,\boldsymbol \xi)$ they give deterministic upper bounds on setwise discrepancies, decomposing the latter into two interpretable components: the \emph{tail mass} $R_k$, which is the usual truncation term, and the \emph{$\xi$-distortion} $D_k=(M_k-1)/(M_k+1)$, where $M_k=\xi_1/\xi_k$ quantifies the range of the multipliers $\xi_j^{-1}$ over $\{1,\dots,k\}$. The distortion term would vanish when $\xi_1=\cdots=\xi_k$ (no reweighting) and increases as $\xi_k$ becomes small relative to $\xi_1$.

Accordingly, $d_{\mathrm{TV}}(G,G^\star_k)\le R_k+D_k$ separates tail error from the additional discrepancy induced by ${\xi}$, while $d_{\mathrm{TV}}(G^{(k)}_{\mathrm{ren}},G^\star_k)\le D_k$ quantifies this reweighting effect alone. See Figure \ref{dp_tv_scatter_bounds}, for an empirical validation of these TV upper bounds.
\begin{figure}[H]
  \centering
  \includegraphics[width= 0.7\textwidth]{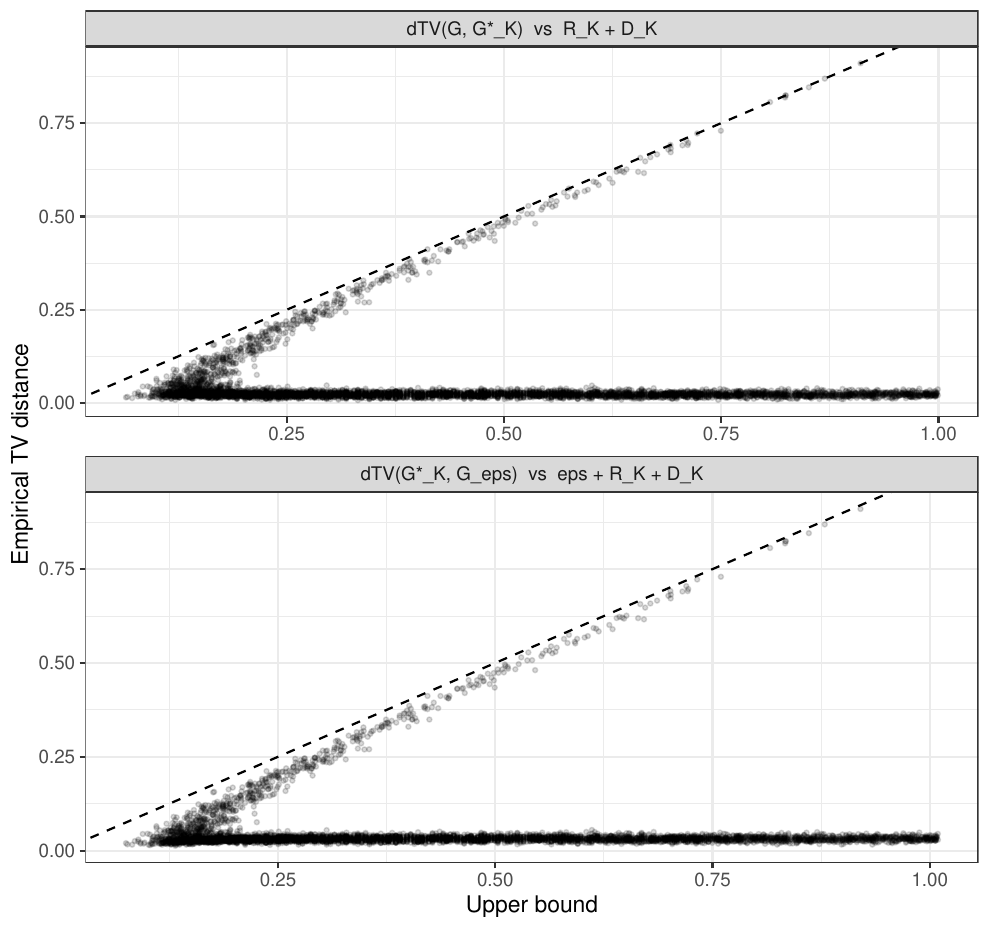}
  \caption{Empirical validation of the TV bounds: $d_{\mathrm{TV}}(G,G^\star_k)$ and $d_{\mathrm{TV}}(G,G_\varepsilon)$. Plots are based on 5,000 simulations with $\alpha = 6$, $\varepsilon=0.01$ and $\eta = 0.01$.
  }
  \label{dp_tv_scatter_bounds}
\end{figure}

When $\boldsymbol\xi$ is deterministic, or random but independent of $\boldsymbol w$, item~(4) in
Proposition~\ref{prop:tv_bounds} can be expressed entirely in terms of moments of the weights.
In particular, combining the identities for $\PP(K=k)$ and $\mathbb{E}[R_K]$ with
$\mathbb{E}[D_K]=\sum_{k\ge1}D_k\,\PP(K=k)$ yields
\begin{equation}\label{eq:tv_bound_detxi_master}
\mathbb E\!\big[d_{\mathrm{TV}}(G,G^\star_K)\big]
\le
\mathbb E[R_K]
+
\sum_{k\ge1} D_k\,\mathbb P(K=k),
\qquad
D_k=\frac{M_k-1}{M_k+1},\ \ M_k=\frac{\xi_1}{\xi_k}.
\end{equation}
For the exponential sequence $\xi_k=q^k=e^{-\eta k}$ ($q=e^{-\eta}\in(0,1)$), one has
$D_k=\frac{1-q^{k-1}}{1+q^{k-1}}=\tanh(\eta(k-1)/2)$ and the expressions for $\PP(K=k)$ often admit a
\emph{matched-rate} removable limit. To streamline notation, define
\[
\Delta_k(x,y)
:=\frac{x^k-y^k}{x-y}\,\mathbb{I}\{x\neq y\}+k\,y^{k-1}\,\mathbb{I}\{x=y\},
\qquad
\Delta_k(y,y)=\lim_{x\to y}\frac{x^k-y^k}{x-y}.
\]
where the case $x=y$ is understood by continuous extension (a removable limit as $x\to y$).

\begin{proposition}\label{prop:K_tail_inherit}
Assume $\xi_k=q^k$ with $q\in(0,1)$ and that $\boldsymbol\xi$ is deterministic (or independent of the weights), so that
$\PP(K=k)=(1-q)\sum_{h=1}^k \mathbb{E}[w_h]\,q^{k-h}$.
If $\mathbb{E}[w_k]\sim C\,k^{-p}$ as $k\to\infty$ for some $C>0$ and $p>0$, then
$\PP(K=k)\sim C\,k^{-p}$ and therefore $\mathbb{E}[K]<\infty$ if and only if $p>2$.
\end{proposition}

\begin{proof}
Write $\PP(K=k)=(1-q)\sum_{m=0}^{k-1}\mathbb{E}[w_{k-m}]\,q^m$.
Since $\mathbb{E}[w_{k-m}]/\mathbb{E}[w_k]\to 1$ for each fixed $m$ and $\sum_{m\ge0}(1-q)q^m=1$,
dominated convergence gives $\PP(K=k)\sim \mathbb{E}[w_k]\sim Ck^{-p}$.
The moment criterion follows from $\sum_k k\,\PP(K=k)<\infty \iff \sum_k k^{1-p}<\infty$.
\end{proof}

{\it Dirichlet process.}
Let $G\sim\mathrm{DP}(\alpha,G_0)$ and set $a:=\alpha/(\alpha+1)$.
For $\xi_k=q^k=e^{-\eta k}$,
\[
\PP(K=k)=\frac{1-q}{\alpha+1}\,\Delta_k(q,a),\qquad k\ge1,
\]
i.e.\ $\PP(K=k)=\frac{1-q}{(\alpha+1)(q-a)}(q^k-a^k)$ for $q\neq a$, with the matched-rate limit
$\PP(K=k)=\frac{k\,a^{k-1}}{(\alpha+1)^2}$ when $q=a$.
Moreover,
\[
\mathbb{E}[R_K]=\frac{\alpha}{2}\,\frac{1-q}{(\alpha+1)-\alpha q},
\]
so \eqref{eq:tv_bound_detxi_master} becomes explicit with $D_k=\tanh(\eta(k-1)/2)$ and the above $\PP(K=k)$.
A convenient calibration is the matched rate $q\approx a$, i.e.
$e^{-\eta}\approx \alpha/(\alpha+1)$ (equivalently $\eta\approx \log(1+1/\alpha)$), which balances the typical size of $K$
against the distortion term $D_k$.

{\it Geometric weights.}
Let $w_k=V(1-V)^{k-1}$, $k\ge1$, and set $a_V:=1-V$.
For $\xi_k=q^k$ and fixed $V$,
\[
\PP(K=k\mid V)=V\,\frac{1-q}{q}\,\Delta_k(q,a_V),\qquad k\ge1,
\]
i.e.\ $\PP(K=k\mid V)=V\frac{1-q}{q-a_V}(q^k-a_V^k)$ for $q\neq a_V$, with the matched-rate limit
$\PP(K=k\mid V)=k\,V^2a_V^{k-1}$ when $q=a_V$.
The stick--breaking choice $\xi_k := T_{k-1} = a_V^{k-1}$
corresponds precisely to the matched rate $q=a_V$ and yields
$\tilde w_j\equiv 1/k$ for $j\le k$ and $K-1\sim\mathrm{NegBin}(2,V)$, so that
$\mathbb{E}[K\mid V]=2/V-1$ and $\mathbb{E}[R_K\mid V]=(1-V)/(2-V)^2$.
If $V\sim\mathrm{Beta}(a_0,b_0)$, the natural-choice pmf marginalizes as
\[
\PP(K=k)=k\,\frac{B(a_0+2,b_0+k-1)}{B(a_0,b_0)}.
\]

{\it Two-parameter Pitman--Yor.}
Let $G\sim\mathrm{PY}(\sigma,\alpha,G_0)$ with $\sigma\in(0,1)$ and $\alpha>-\sigma$.
Under deterministic $\xi_k=q^k$, the general identities apply once $\mathbb{E}[w_k]$ and $\mathbb{E}[w_hw_j]$ are specified.
In particular, using the rising factorial $(x)_n=\Gamma(x+n)/\Gamma(x)$,
\[
\mathbb E[w_k]
=
\frac{1-\sigma}{\alpha+1+(k-1)\sigma}
\prod_{\ell=1}^{k-1}\frac{\alpha+\ell\sigma}{\alpha+1+(\ell-1)\sigma}=
\frac{1-\sigma}{\sigma}\,
\frac{\big(\frac{\alpha}{\sigma}+1\big)_{k-1}}{\big(\frac{\alpha+1}{\sigma}\big)_{k}}.
\]
Moreover, $\mathbb{E}[w_hw_j]$ admits a closed product  computable form in terms of Beta moments.
The tail regime is explicit: $\mathbb{E}[w_k]\sim C_{\sigma,\alpha}\,k^{-1/\sigma}$ as $k\to\infty$, and therefore by
Proposition~\ref{prop:K_tail_inherit},
\[
\PP(K=k)\sim C_{\sigma,\alpha}\,k^{-1/\sigma},
\qquad
\mathbb{E}[K]<\infty\ \Longleftrightarrow\ \sigma<\tfrac12.
\]
For the stick--breaking choice $\xi_k := T_{k-1}$, the conditional simplifications in
Corollary~1 apply (including $D_k=(1-T_{k-1})/(1+T_{k-1})$), but the moment reductions
leading to \eqref{eq:tv_bound_detxi_master} are no longer available because $\boldsymbol\xi$ is adapted to the weights.

In general, when $\xi_k := T_{k-1}$ one has $\xi_k-\xi_{k+1}=w_k$ and $s_k=\sum_{j\le k} w_j/\xi_j$
simplifies (e.g.\ to $\sum_{j\le k} v_j$ under Corollary~1), making simulation and posterior
computation particularly simple.
However, since $\boldsymbol\xi$ depends on the same random variables that determine $(w_j)$, closed-form marginal expressions for $\PP(K=k)$ and $\mathbb{E}[R_K]$ are typically unavailable beyond special cases (such as geometric weights), and are best assessed conditionally (e.g.\ within an augmented sampler) via the diagnostics in item~(4) of Proposition~\ref{prop:tv_bounds}.

Equation \eqref{eq:tv_bound_detxi_master} separates an expected tail term $\mathbb{E}[R_K]$ from an expected distortion term $\sum_k D_k\PP(K=k)$.
For exponential $\xi_k=e^{-\eta k}$, $D_k=\tanh(\eta(k-1)/2)$ gives a transparent tuning knob, while $\PP(K=k)$  is determined by the prior on the weights.
In light-tailed priors (DP and geometric), $\PP(K=k)$ is explicit and $K$ is light-tailed; in heavier-tailed priors (Pitman--Yor with $\sigma>0$), $K$ inherits a polynomial tail under exponential $\boldsymbol\xi$, yielding distinct computational regimes driven by~$\sigma$.
For the choice $\xi_k=R_{k-1}$, conditional simplifications are strong, but marginal closed forms are generally limited to special cases.

\section{SSP mixture modeling and posterior computation}\label{sec:model}
We now turn to the use of our representation  as a finite-dimensional augmentation for posterior inference in SSP mixtures.
Let $G=\sum_{j\ge1} w_j\,\delta_{\theta_j}$ be a proper SSP prior on the parameter space and consider the mixture model for observations  $\boldsymbol x = (x_i)_{i=1}^n$
\begin{equation}\label{eq:ssp_mixture}
x_i\mid G \overset{\text{iid}}{\sim} f_G(x),\qquad
f_G(x)=\int f(x\mid\theta)\,G(d\theta)
=\sum_{j\ge1} w_j\,f(x\mid\theta_j),
\end{equation}
with $\theta_j\overset{\text{iid}}{\sim}G_0$ and a generic kernel $f(\cdot\mid\theta)$. Posterior computation is based on the latent augmentation implied by
Theorem~\ref{thm:finite-ssrm}.
For each observation $i=1,\ldots,n$, we introduce a latent truncation level
$k_i\in\mathbb{N}$ and an allocation variable $z_i\in\mathbb{N}$ satisfying
$z_i \le k_i$, where $k_i$ determines the number of active mixture components
and $z_i$ indexes the component generating $x_i$. A convenient way to see this is through the hierarchical model
\begin{align*}
\boldsymbol w &\sim p(\boldsymbol w)\quad \mbox{and}\quad 
 \theta_j \sim G_0,\ j=1, 2,\ldots \\[4pt]
k_i \mid \boldsymbol w, \boldsymbol \xi &\sim p(k_i = k \mid \boldsymbol w)
= (\xi_k - \xi_{k+1})\,s_k,\ k=1, 2,\ldots\\[4pt]
z_i \mid k_i,\boldsymbol w &\sim \sum_{j=1}^{k_i} \tilde w_{j}\,\delta_j,\ \tilde w_j
= \frac{\xi_j^{-1} w_j}{s_{k_i}},\ 
j = 1,\dots,k_i, \\[4pt]
x_i \mid z_i,\boldsymbol\theta
&\sim f(\cdot \mid \theta_{z_i}),\ i =1, \ldots, n.
\end{align*}
where $s_k:=\sum_{h=1}^k w_h/\xi_h$. Here, $p(\boldsymbol w)$ denotes the  prior on the SSP weights. 

Let $\boldsymbol x=(x_i)_{i=1}^n$, $\boldsymbol z=(z_i)_{i=1}^n$, $\boldsymbol k=(k_i)_{i=1}^n$,
$\boldsymbol\theta=(\theta_j)_{j\ge1}$ and $\boldsymbol w=(w_j)_{j\ge1}$.
The hierarchical model implies the joint density
\begin{align}
p(\boldsymbol x,\boldsymbol z,\boldsymbol k,\boldsymbol\theta,\boldsymbol w)
&=
p(\boldsymbol w)\,
\prod_{j\ge1} p(\theta_j)\,
\prod_{i=1}^n p(k_i\mid \boldsymbol w,\boldsymbol\xi)\,
p(z_i\mid k_i,\boldsymbol w,\boldsymbol\xi)\,
p(x_i\mid z_i,\boldsymbol\theta)\nonumber\\
&=
p(\boldsymbol w)\,
\Bigg[\prod_{i=1}^n(\xi_{k_i}-\xi_{k_i+1})\,\mathbb I(z_i\le k_i)\Bigg]
\prod_{j=1}^{k^\ast}
\Bigg\{
p(\theta_j)\,
\Big(\frac{w_j}{\xi_j}\Big)^{n_j}
\prod_{i:z_i=j} f(x_i\mid \theta_j)
\Bigg\},
\label{eq:bj_compact}
\end{align}
where $k^\ast=\max_{1\le i\le n}k_i$ and $n_j=\sum_{i=1}^n\mathbb I(z_i=j)$ for $j=1,\ldots,k^\ast$.
The cancellation of the normalizing constants $s_{k_i}$ is explicit in \eqref{eq:bj_compact} and yields 
\[
p(z_i,k_i\mid x_i,\boldsymbol w,\boldsymbol\theta,\boldsymbol\xi)\ \propto\
(\xi_{k_i}-\xi_{k_i+1})\,\mathbb{I}\{z_i\le k_i\}\,
\frac{w_{z_i}}{\xi_{z_i}}\,
f(x_i\mid \theta_{z_i}),
\]
since the normalizing factor $s_{k_i}$ cancels. Two essential posterior updates are immediate
\begin{align}
p(z_i=j\mid k_i,\cdots)
&\propto
\frac{w_j}{\xi_j}\,f(x_i\mid\theta_j),
\qquad j=1,\ldots,k_i, \label{eq:zi_full}\\[3pt]
p(k_i=k\mid z_i,\cdots)
&=
\frac{(\xi_k-\xi_{k+1})\,\mathbb{I}\{k\ge z_i\}}{\xi_{z_i}},
\qquad k=z_i,z_i+1,\ldots \label{eq:ki_full}
\end{align}
(where $\sum_{k=z_i}^\infty(\xi_k-\xi_{k+1})=\xi_{z_i}$ is used in \eqref{eq:ki_full}).
Conditionally on the allocations, the component parameters have the standard update
\[
p(\theta_j\mid \boldsymbol x,\boldsymbol z,\boldsymbol k)\ \propto\ p(\theta_j)\prod_{i:z_i=j} f(x_i\mid\theta_j),
\qquad j=1,\ldots,k^\ast,
\]
which is conjugate in the examples of Section~\ref{sec:illustrations}.

Clearly, full conditionals are determined by the choice of the decreasing sequence $\{\xi_j\}$:

{\bf Case A (endogenous $\boldsymbol\xi$).}
When $\xi_j := T_{j-1}=\prod_{\ell<j}(1-v_\ell)$, the updates simplify
substantially. In particular, under Corollary~\ref{C2} one has $\xi_j^{-1}w_j=v_j$ and
$\xi_k-\xi_{k+1}=w_k$, so \eqref{eq:zi_full} becomes $p(z_i=j\mid k_i,\cdots)\propto v_j f(x_i\mid\theta_j)$
and \eqref{eq:ki_full} reduces to a tail-scan over $(w_k)_{k\ge z_i}$. Moreover, the stick-breaking variables admit closed-form Beta updates that incorporate both allocation counts and truncation counts.

{\bf Case B (exogenous $\boldsymbol\xi$).}
When $\xi_j$ is deterministic and independent of the SSP, the truncation update \eqref{eq:ki_full} is particularly convenient: it depends only on the interval lengths $(\xi_k-\xi_{k+1})$ and admits efficient inversion.
For example, for $\xi_j=\exp(-\eta j)$ one obtains the closed-form update
$k_i=\big\lfloor z_i-\eta^{-1}\log U\big\rfloor$ with $U\sim\mathrm{Unif}(0,1)$, while geometric-type sequences yield $k_i=z_i+S_i$ with $S_i$ geometric.
In this regime, the stick-breaking updates retain their standard form, since $\{\xi_j\}$ is independent of the weight-generating variables.

The augmentation above can be viewed as a discrete analogue of the slice construction of \citet{Kalli2011}.
For deterministic $\xi$, the truncation variable $k_i$ plays the role of the last ``visible'' component for observation $i$ and can be sampled directly from \eqref{eq:ki_full} (often in closed form), avoiding the auxiliary continuous slice variables and the associated bookkeeping. 

\section{Truncation, clustering, and finite mixture models}\label{sec:counts}

Several quantities in SSP mixtures evoke the finite-mixture notion of a ``number of components''.  
In a standard finite mixture, the observations $\boldsymbol x=(x_i)_{i=1}^n$ are modeled via a mixing measure
\[
G=\sum_{j=1}^m w_j\,\delta_{\theta_j},
\]
where $m$ (fixed or random) is shared by the entire sample. 
SSP models feature analogous counts, but each indexes a different object in the hierarchy. 
Keeping these roles distinct is key for interpreting both the finite representation of $G$ and the cluster summaries reported in Section~\ref{sec:illustrations}.

The quantities $K$, $c_n$, and $m$, roughly random truncation, clusters and number of components, may each be described informally as component counts, yet they serve fundamentally different purposes (with $K$ and $m$ closest to the usual finite-mixture usage). 
In particular, they correspond to different notions of ``components'' and therefore admit different interpretations:

\begin{itemize}
\item $K$ is the random finite-representation truncation level in Theorem~\ref{thm:finite-ssrm}. It is a \emph{prior-level auxiliary} variable introduced to obtain a two-stage finite representation of the same nonparametric prior on $G$; the original prior  is recovered setwise after averaging over $K$. It exists prior to modeling or observing data. In other words, it is representational/computational and should not be read as a ``true'' number of clusters. 
\item $c_n:=\sum_{j\ge1}\mathbb I\{n_j>0\}$ is the number of occupied clusters in a sample of size $n$.
It is a \emph{data-dependent} occupancy statistic of the induced random partition, and it can be computed
from the allocations by counting the nonempty cluster sizes $n_j$.

\item $m$ is the number of components in a classical finite mixture. It is a \emph{structural} model dimension: changing $m$ defines a different statistical model, and when identifiable under a finite-mixture specification it is a genuine parameter of interest.
\end{itemize}

For posterior computation in SSP mixtures we introduce per-observation truncation variables $k_i$ and allocations $z_i$ as \emph{algorithmic augmentation}. They restrict the set of components that are ``active'' for each $x_i$ and yield finite-dimensional updates. These variables are auxiliary: their joint distribution is chosen so that, after integrating out $(\boldsymbol z,\boldsymbol k)$, one recovers exactly the same joint law for $(G,x_1,x_2,\ldots)$ as under the original SSP mixture. Accordingly, $k_i$ should not be interpreted as a model parameter or as a proxy for either $K$, $c_n$ or $m$.

In approaches such as those referred to as mixtures of finite mixtures (MFM) \citep{Miller2013inconsistency}, one places a prior on a finite number of components $m$ and, conditional on $m$, the mixing measure is typically a symmetric Dirichlet distribution over $m$ atoms (or a closely related finite-species construction). In this setting, $m$ is a \emph{model-level parameter}: varying its prior changes the prior law of the mixing measure, and one can study relationships such as $\mathbb P(c_n=c\mid m)$ and posterior concentration on a finite ``true'' $m$ under suitable conditions (e.g.\ for finite-species Gibbs-type priors with $\sigma<0$). {See, e.g., \citet{GnedinPitman2006,LijoiMenaPrunster2007Biometrika,DeBlasi2013,miller2018mixture} and references therein.}

In contrast, in Theorem~\ref{thm:finite-ssrm} the random truncation level $K$ is derived from the SSP weights and the chosen decreasing sequence $\{\xi_j\}$; it is introduced only to obtain an exact conditionally finite representation. Averaging over the induced law of $K$ recovers exactly the original SSP prior on $G$ at the level of set masses, so $K$ does not encode an additional modeling choice and does not alter the support of the prior or its induced partition structure.

\subsection{Interpreting \texorpdfstring{$c_n$}{c\_n} in finite and nonparametric models}
\label{subsec:counts_cn}

In nonparametric SSP mixtures, $c_n$ is naturally interpreted as a summary of the random partition induced by $G$ on a sample of size $n$. In finite-mixture models, however, $c_n$ is sometimes informally compared with $m$; apparent discrepancies are then occasionally misinterpreted within interpretations of  Bayesian nonparametric methods. From the viewpoint above, this tension typically reflects \emph{identifiability of $m$ under a chosen finite model} and the fact that $c_n$ is an occupancy statistic rather than a dimension parameter. In Section~\ref{sec:illustrations} we therefore report $c_n$ as a diagnostic summary of posterior partition structure, and we compare it with the data-generating finite-mixture order only for the simulated example and only as a qualitative check, not as a consistency claim about a ``true'' number of components.

The key message  is that: $K$ (representation size), $c_n$ (occupancy), and $m$ (finite-model dimension) are not directly comparable. In particular, it is generally inappropriate to interpret $c_n$ as an estimator of $K$ or to interpret $K$ as a ``true'' number of clusters/components.

\section{Illustrations}\label{sec:illustrations}

We illustrate the impact of our augmentation on mixture-model inference using simulated and real data. We report (i) posterior predictive density estimates and credible bands, (ii) the posterior behaviour of the number of occupied clusters $c_n=\sum_{j\ge1}\mathbb{I}\{n_j>0\}$, with $n_j = \sum_{i=1}^n \mathbb I(z_i = j),\quad \mbox{for}\quad j = 1,2,\ldots, k^\ast$, $k^\ast = \displaystyle\max_{i=1, \ldots, n} \left\{k_i\right\}$ and (iii) execution times.
We compare finite-representation samplers (DPFinite and GSBFinite, corresponding to Corollary~\ref{C2} and the geometric specialization) with generalized slice samplers \citep{Kalli2011} under matched priors (DPSlice and GSBSlice).
For the finite-representation samplers we consider both the endogenous ``natural'' choice $\xi_j=\prod_{\ell<j}(1-v_\ell)$ and the exogenous exponential choice $\xi_j=\exp(-\eta j)$, highlighting how $\eta$ affects both computational cost (through typical truncation levels) and mixing behaviour (through the induced reweighting). For the DPSlice and GSBSlice models we have considered the same deterministic sequence in all the examples by taking $\eta=1$.  For all models, we take Normal kernels so that
$\theta_j = (\mu_j, \tau_j)$, for which we assign independent normal–gamma priors
\begin{equation}
    G_0(\mu_j, \tau_j) = \dnormal(\mu_j \mid \mu_0, \tau_0^{-1}) \, \dgamma(\tau_j \mid a, b).
\end{equation}
%$G_0(\mu_j,\tau_j)=N(\mu_j\mid \mu_0,\tau_0^{-1})\,\mathrm{Ga}(\tau_j\mid a,b),\qquad \alpha\sim \mathrm{Ga}(0.1,0.1).$$

Throughout the experiments, the hyperparameters are fixed at
\[
(\mu_0, \tau_0, a, b) = (0, 0.001, 0.001, 0.001),
\]
a weakly informative specification that has minimal influence on the posterior
distribution. For the Dirichlet process–based models, the concentration parameter $\alpha$ is assigned a Gamma prior, $\alpha \sim \dgamma(0.1, 0.1)$, which has mean $1$ and variance $10$. For the GSBFinite and GSBSlice models, we assign a uniform prior $v \sim {\rm Beta}(1, 1)$ for the geometric parameter. All samplers
were run for $S=100,\!000$ iterations, with predictive samples obtained after a
burn–in period of $20,\!000$ iterations.

For each example, we additionally report the execution times of all the models in the comparison over the $100,\!000$ iterations for the different choices of $\xi_j.$ All simulations have been conducted using the Julia language on a MacBook Air with M2 chip and 8GB RAM.
Implementation details and the Julia code required to reproduce all numerical results are provided in the Supplementary material.

{\it Monte Carlo predictive density estimator.}
Given posterior draws $\{(\boldsymbol w^{(s)},\boldsymbol\theta^{(s)})\}_{s=1}^S$,
we estimate the predictive density by
\[
\hat f(x)=\frac{1}{S}\sum_{s=1}^S f_{G^{(s)}}(x)
=\frac{1}{S}\sum_{s=1}^S \sum_{j=1}^{k^{\ast(s)}} w_j^{(s)}\,f\!\left(x\mid \theta_j^{(s)}\right),
\]
where $k^{\ast(s)}=\max\{k_1^{(s)},\ldots,k_n^{(s)}\}$ for the finite-representation samplers (and analogously for the slice samplers).
For the Normal–Gamma specification in the experiments, $\theta_j=(\mu_j,\tau_j)$ and
$f(x\mid\theta_j)=\mathcal N\!\big(x;\mu_j,\tau_j^{-1}\big)$.

\subsection{Simulated data example}\label{subsec:simulated}

We first consider  $n=250$ observations from a four–component Gaussian mixture 
\begin{equation}\label{DataMix}
f(x)=\sum_{j=1}^{4} w_j\,\dnormal(x\mid \mu_j,\sigma_j^2),
\end{equation}
with $w_{1:4}=(0.5,0.2,0.2,0.1)$, $\mu_{1:4}=(-4,0,5,8)$ and $\sigma_{1:4}=(0.8,1,0.5,1.5)$.
We compare DP/GSB finite–representation samplers against their slice counterparts, and report (i) posterior predictive density estimates with $95\%$ credible bands, (ii) the evolution of the occupied–cluster count $c_n$, and (iii) execution times.

\begin{figure*}[tbh]
        \centering
        \includegraphics[width=\textwidth]{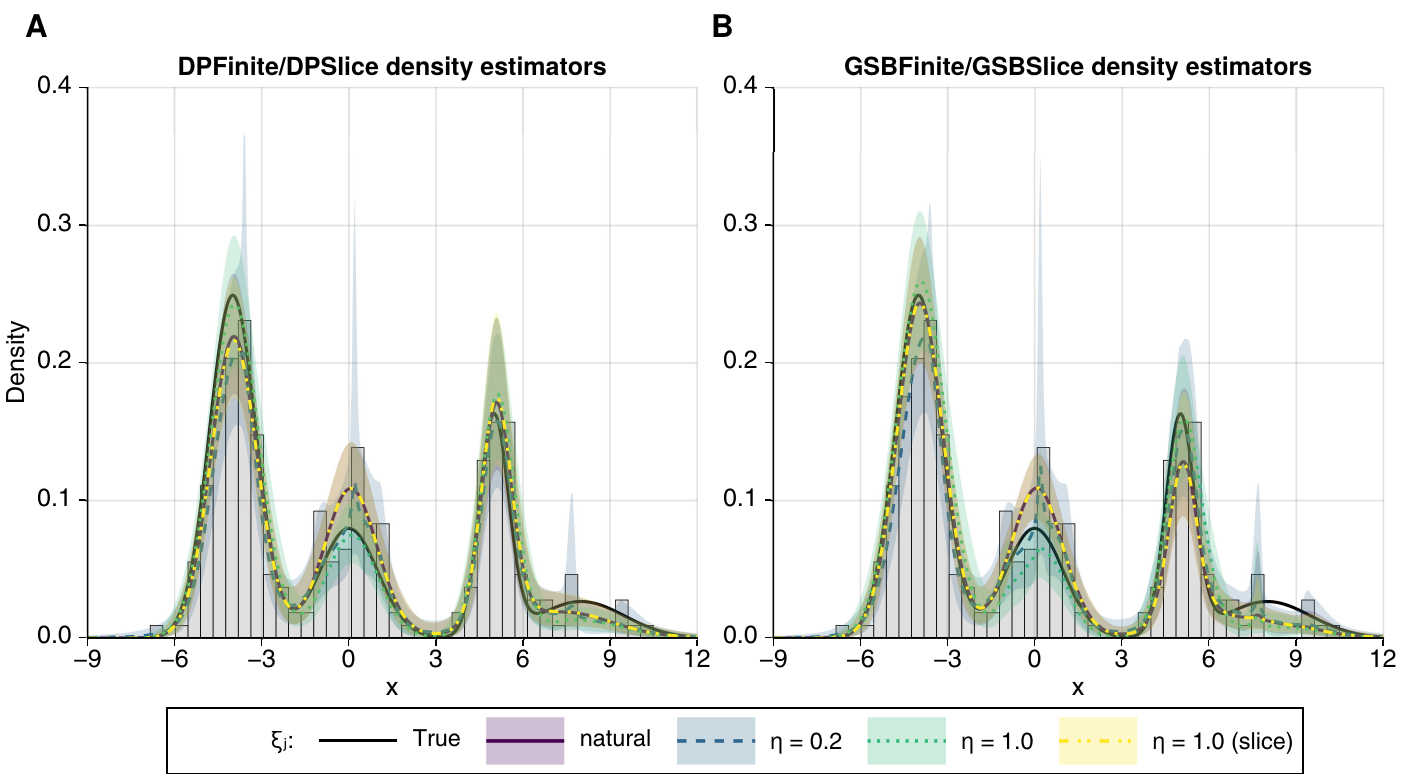}
        \caption{Histogram of the simulated data with Monte Carlo density estimators and $95\%$ credible intervals for different choices of $\xi_j$ and $\eta$. Panel A: DPFinite vs.\ DPSlice. Panel B: GSBFinite vs.\ GSBSlice.}
        \label{fig:fig1}
\end{figure*}

\begin{figure*}[tbh]
        \centering
        \includegraphics[width=\textwidth]{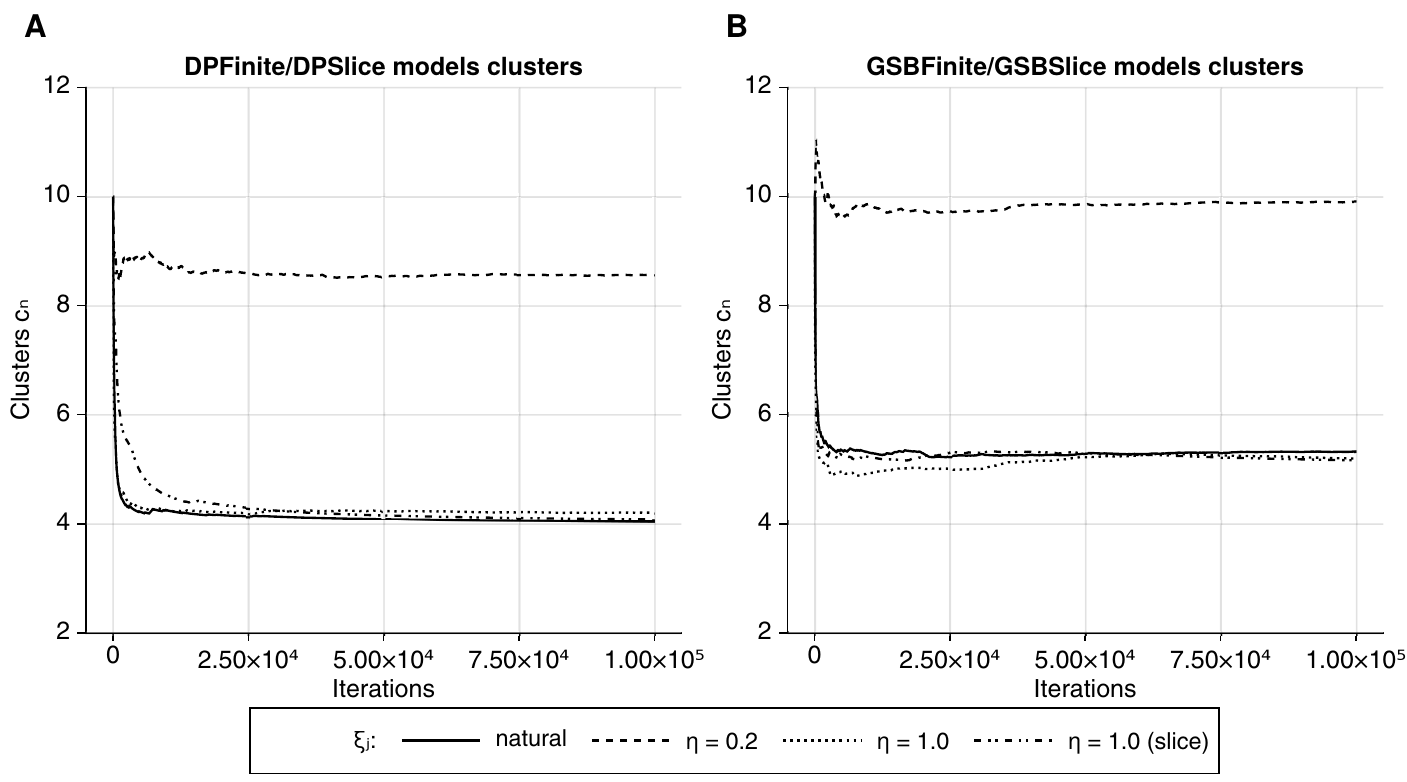}
        \caption{Ergodic means of the occupied–cluster count $c_n$ over iterations. Panel A: DPFinite vs.\ DPSlice. Panel B: GSBFinite vs.\ GSBSlice, for different choices of $\xi_j$ and $\eta$.}
        \label{fig:fig2}
\end{figure*}

Figures~\ref{fig:fig1}--\ref{fig:fig2} summarize the main qualitative behavior.
A small value of $\eta$ (here $\eta=0.2$) tends to increase posterior mass on larger $c_n$ for both DP/GSBFinite, which is reflected in sharper local features of the corresponding density estimates.
For the remaining values of $\eta$ and for the natural random sequence $\xi_j$, DPFinite and DPSlice concentrate around the order of the data-generating mixture.
The GSB-based models yield slightly larger $c_n$, consistent with the over-clustering behavior reported in \citet{DeBlasi2020, Hatjispyros2023}.

\begin{table*}[!htb]
\caption{Execution times (seconds) for $100,\!000$ iterations for the simulated four–component mixture, for two sample sizes.\label{tab:table1}}
\centering
\setlength{\tabcolsep}{6pt}
\begin{tabular*}{\textwidth}{@{\extracolsep{\fill}}lcccc@{}}
\toprule
 & \multicolumn{2}{c}{DP based models} & \multicolumn{2}{c}{GSB based models} \\
\cmidrule(lr){2-3}\cmidrule(lr){4-5}
$\xi_j$ & $n=250$ & $n=1000$ & $n=250$& $n=1000$  \\
\midrule
natural  & 10.508  & 37.697 & 18.429& 74.997 \\
$\eta=0.2$ & 33.976   & 82.733&  35.564& 79.480\\
$\eta =1.0$   & 11.058   & 26.635& 11.192 & 26.868\\
$\eta=1.0$ (slice)  & 20.242  & 64.697 & 21.492 &71.570 \\
\bottomrule
\end{tabular*}
\begin{tablenotes}
\footnotesize
\item Note: density estimates are evaluated on a grid of $500$ points.
\end{tablenotes}
\end{table*}

Table~\ref{tab:table1} reports execution times for $n=250$ and for a larger dataset ($n=1000$) generated from \eqref{DataMix}. In this experiment, DPFinite is consistently faster than DPSlice, with the natural choice of $\xi_j$ providing the most favorable scaling.

\subsection{Galaxy data}\label{subsec:galaxy}

We next analyze the galaxy data: velocities (km/s) of $n=82$ galaxies in the Corona Borealis region, a standard benchmark known to exhibit multimodality with roughly three to six clusters in many analyses \citep{RichardsonGreen1997, RoederWasserman1997}. As before, we focus on posterior predictive density estimates, the occupied–cluster count $c_n$, and execution times.

\begin{figure*}[tbh]
        \centering
        \includegraphics[width=\textwidth]{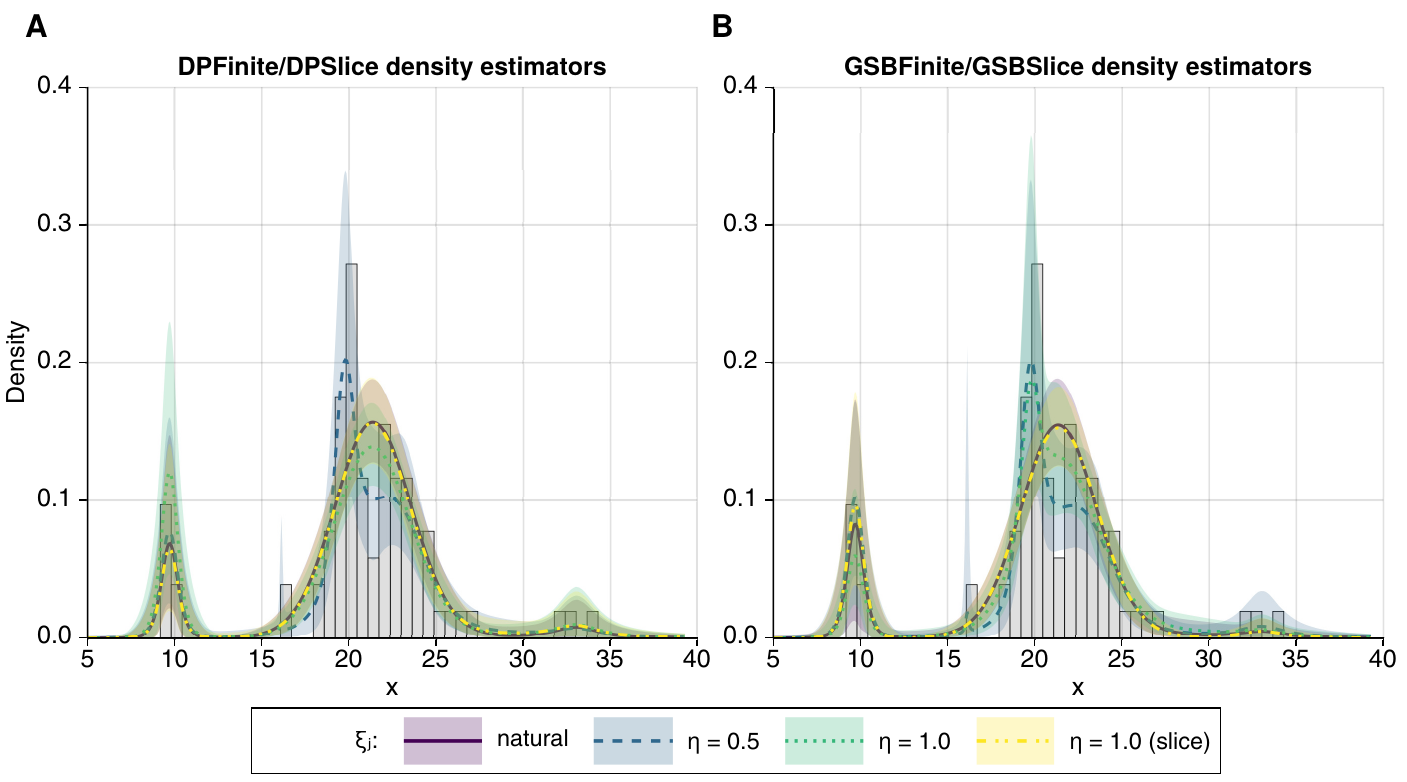}
        \caption{Galaxy data: histogram with Monte Carlo density estimators and $95\%$ credible intervals for different choices of $\xi_j$ and $\eta$. Panel A: DPFinite vs.\ DPSlice. Panel B: GSBFinite vs.\ GSBSlice.}
        \label{fig:fig3}
\end{figure*}

\begin{figure*}[tbh]
        \centering
        \includegraphics[width=\textwidth]{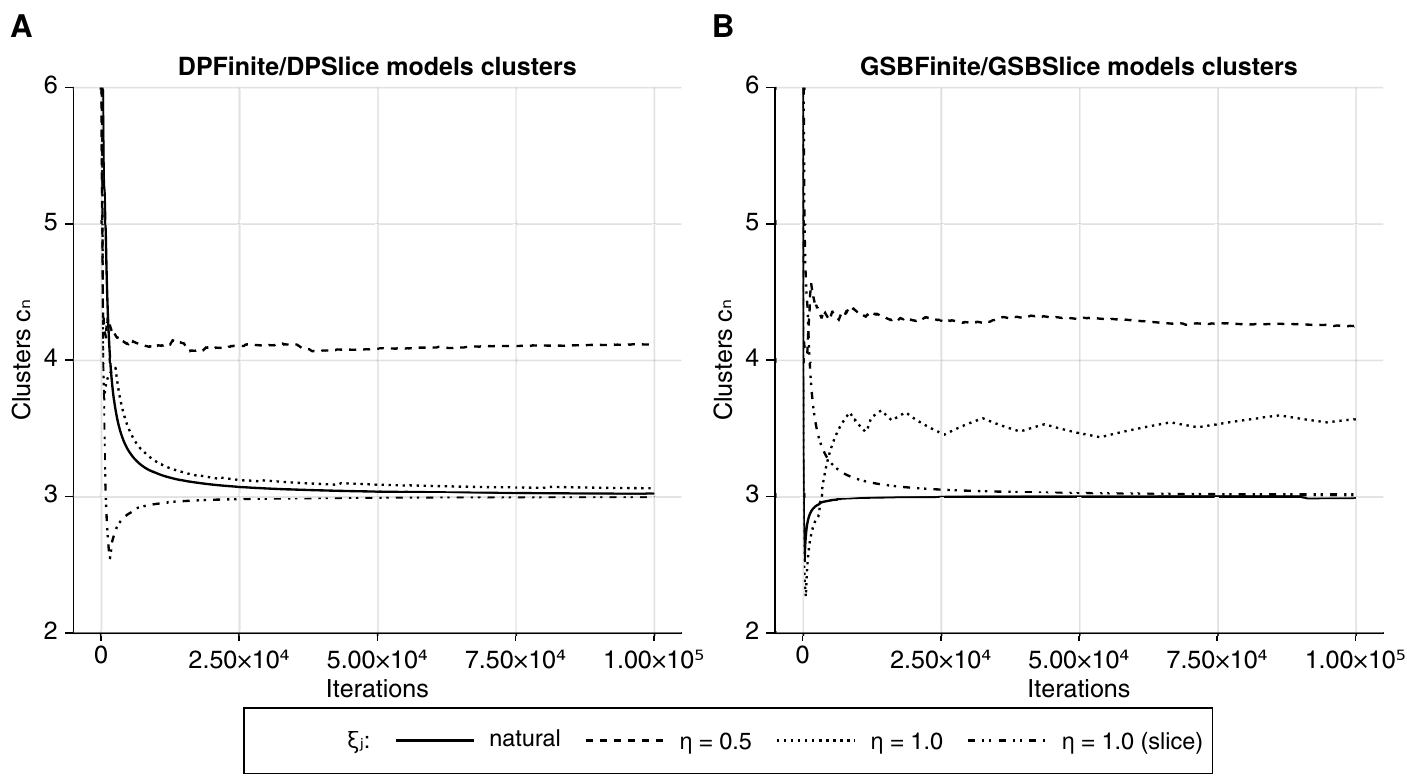}
        \caption{Galaxy data: ergodic means of the occupied–cluster count $c_n$ over iterations. Panel A: DPFinite vs.\ DPSlice. Panel B: GSBFinite vs.\ GSBSlice, for different choices of $\xi_j$ and $\eta$.}
        \label{fig:fig4}
\end{figure*}

Figure~\ref{fig:fig3} shows that all methods capture the multimodal structure of the data.
Figure~\ref{fig:fig4} indicates that the DP-based models stabilize around three occupied clusters, while smaller values of $\eta$ (here $\eta=0.5$) lead to slightly larger posterior $c_n$, reflecting increased exploration of additional components.

\begin{table*}[!htb]
\caption{Execution times (seconds) for $100,\!000$ iterations for the galaxy data.\label{tab:table2}}
\centering
\setlength{\tabcolsep}{6pt}
\begin{tabular*}{\textwidth}{@{\extracolsep{\fill}}lcc@{}}
\toprule
$\xi_j$ & DPFinite/DPSlice & GSBFinite/GSBSlice  \\
\midrule
natural  & 5.269   & 5.827 \\
$\eta=0.5$ &14.449  & 9.434\\
$\eta =1.0$   & 7.056   & 6.428\\
$\eta=1.0$ (slice)  & 7.061 & 6.068\\
\bottomrule
\end{tabular*}
\begin{tablenotes}
\footnotesize
\item Note: density estimates are evaluated on a grid of $500$ points.
\end{tablenotes}
\end{table*}

Execution times are reported in Table~\ref{tab:table2}. In this dataset, the natural choice of $\xi_j$ again yields competitive runtimes, and the finite–representation samplers are comparable to (and in some cases faster than) their slice-based counterparts.

\section{Conclusions}
This paper develops a new perspective on Bayesian nonparametric  modeling by introducing an exact two-stage finite representation of proper species sampling processes. The key insight is that an SSP can be reparametrized via a  finite-mixture construction with a latent truncation level 
$K$ and reweighted atoms, while preserving the original SSP setwise after averaging over 
$K$. This representation is therefore not an approximation but a structural reformulation that yields a  finite random measure together with an explicit  law  for $K$.

Building on this representation, we proposed finite-dimensional augmentation schemes for SSP mixtures and derived Gibbs samplers that avoid ad hoc fixed truncation levels. The resulting updates are simple to implement and apply broadly across SSP priors used in mixture models, including Dirichlet, two-parameter Pitman--Yor, geometric, and more general stick--breaking families, as well as dependent-length constructions.

Beyond MCMC, the two-stage finite representation can also be useful in settings where one needs finite-dimensional prior draws from an SSP. For example, in fast-search methods for BNP mixtures such as BNP--CAEM \citep{Karabatsos2021}, prior draws are typically obtained through truncation; our construction provides an exact two-stage alternative in which the truncation level is random and model-induced, and the original SSP  is recovered setwise after averaging over that auxiliary truncation variable.

Our empirical results on simulated mixtures and the benchmark galaxy dataset show that the proposed finite-representation samplers recover the underlying density and yield sensible posterior distributions for the occupied-cluster count $c_n$, with competitive (and often improved) execution times relative to generalized slice samplers. In particular, the choice of the decreasing sequence $\{\xi_j\}$ can have a noticeable impact on both mixing behaviour and computational cost, with the natural stick-breaking choice often leading to particularly efficient updates.

Several quantities in SSP mixtures resemble a ``number of components'' but live at different levels of the hierarchy. In particular, the finite-representation size $K$, the data-dependent occupancy count $c_n$, and the finite-mixture dimension $m$ are not directly comparable. We refer to Section~\ref{sec:counts} for a detailed discussion, including the relationship to mixtures of finite mixtures and the role of identifiability when interpreting $c_n$ under finite models.

Overall, the framework presented here strengthens the connection between random partitions and mixture modeling, offering a unified and tractable view of SSP-based priors and their computation. We anticipate that this perspective will facilitate further developments in Bayesian nonparametrics, particularly for more complex hierarchical and non-exchangeable structures where exact finite representations may offer both conceptual clarity and practical computational advantages.

\footnotesize
\bibliographystyle{apalike-ejor}
\bibliography{reference}

\end{document}